\newcommand{\commentout}[1]{}
\newcommand{\QED}{\hbox{\hskip 4pt \vrule width 5pt height 6pt depth
    1.5pt\hskip 2pt}}
\newtheorem{thm}{Theorem}
\newtheorem{cor}{Corollary}
\newtheorem{lem}{Lemma}
\newtheorem{obsr}{Observation}
\begin{document}
\date{}
\title{Longest Common Subsequence in $k$ Length Substrings\thanks{A preliminary version of this paper appeared in SISAP 2013.}}

\author{
Gary Benson\inst{1} \thanks{This work was partially funded by NSF
grant IIS-1017621.}, Avivit Levy\inst{2} and B.~Riva
Shalom\inst{2} } \institute{ Department of Computer Science,
Boston University, Boston,  MA 02215 , {\tt gbenson@bu.edu};
  \and  Department of Software Engineering,
Shenkar College,  Ramat-Gan 52526, Israel; {\tt $\{$avivitlevy,
rivash$\}$@shenkar.ac.il} }

\maketitle

\begin{abstract}
In this paper we define a new problem, motivated by computational
biology, $LCSk$ aiming at finding the  maximal number of $k$
length $substrings$, matching in both input strings while
preserving their order of appearance. The traditional LCS
definition is a special case of our problem, where $k = 1$. We
provide an algorithm, solving the general case in $O(n^2)$ time,
where $n$ is the length of the input strings, equaling the time
required for the special case of $k=1$. The space requirement of
the algorithm is $O(kn)$. 
 We also define a complementary $EDk$ distance measure and show that
$EDk(A,B)$ can be computed in $O(nm)$ time and $O(km)$ space,
where $m$, $n$ are the lengths of the input sequences $A$ and $B$
respectively.
\end{abstract}

\textbf{Keywords:} Longest common subsequence, Similarity of
strings, Edit distance, Dynamic programming.

\section{Introduction}\label{s:intro}
The {\em Longest Common Subsequence} problem, whose first famous
dynamic programming solution appeared in 1974~\cite{WF-74}, is one
of the classical problems in computer science. The widely known
string version appears in Definition \ref{d:lcs1}.

\begin{definition}\label{d:lcs1}
  {\em The String Longest Common Subsequence ($LCS$) Problem}:\\
\begin{tabular}{ll}
 Input: &  Two sequences $A = a_1a_2\ldots a_n,$  $B=b_1b_2\ldots b_n$ over alphabet
$\Sigma$. \\
  Output: & The length of the longest subsequence common of both
strings, \\ &where a subsequence is a sequence that can be derived
from \\ &another sequence by deleting  some elements without
changing
\\ & the order of the remaining elements.\\
\end{tabular}
\end{definition}
For example, for the sequences appearing in Figure \ref{f:pair
example}, $LCS(A, B)$ is 5, where a possible such subsequence is
$T\ T\ G\ T\ G$.

\paragraph{\textbf{Remark:}} We alternately use the terms
 \emph{string} and \emph{sequence} throughout the paper, since both terms are
 common as the input of the LCS problem. Nevertheless, a
 \emph{subsequence} is obtained from a sequence by deleting symbols
at any index we want,   while a \emph{substring} is a consecutive
part of the string.

The LCS problem is motivated by the need to
measure similarity of sequences and strings, and it has been very well studied (for a survey,
see~\cite{bhr:00}). The well known dynamic programming solution
\cite{Hi-75} requires running time of $O(n^2)$, for two input
strings of length $n$.

The LCS problem has also been investigated
on more general structures such as trees and matrices
\cite{ahkst:06}, run-length encoded strings \cite{als:99},
weighted sequences \cite{ags:10}, \cite{bjv:12} and more. Many
variants of the LCS problem were studied as well, among which LCS
alignment \cite{lz:01}, \cite{lsz:03}, \cite{lmz:04}, constrained
LCS \cite{T:03}, \cite{cc:11}, restricted LCS \cite{ghll:10} and LCS approximation \cite{lln:11}.

\paragraph{\textbf{Motivation.}} The LCS has been also used as a measure of
sequence similarity for biological sequence comparison. Its
strength lies in its simplicity which has allowed development of
an extremely fast, bit-parallel computation which uses the bits in
a computer word to represent adjacent cells a row of the LCS
scoring matrix and computer logic operations to calculate the
scores from one row to the next \cite{ad:86}, \cite{cipr:01},
\cite{h:04}. For example, in a recent experiment, 25,000,000
bit-parallel LCS computations (sequence length = 63) took
approximately 7 seconds on a typical desktop computer
\cite{bhl:13} or about 60 times faster than a standard algorithm.
This speed makes the LCS attractive for sequence comparison
performed on high-sequencing data. The disadvantage of the LCS is
that it is a crude measure of similarity because consecutive
matching letters in the LCS can have different spacings in the two
sequences, i.e., there is no penalty for insertion and deletion. Indeed,
 as is well-known, there is a strong relation between the total number of
 insertions and deletions and the LCS. However, there are no limitations on
 the ``distribution'' of such insertions and deletions. Consider for example
 the following two sequences:
\[A = ( G T G)^{n/3}\]
\[ B = (T C C )^{n/3}\]
In these sequences the LCS is quite large, of size $n/3$, but
no two matched symbols are consecutive in both sequences.
Any common subsequence of this size ``put together'' separated
elements implying a rather ``artificial'' similarity of the
sequences. \commentout{
\[A=1^{n/4}0^{n/2}1^{n/4}\]
\[ B=0^{n/4}1^{n/2}0^{n/4}\]
In these sequences the LCS is quite large, of size $n/2$, but any
common subsequence of this size ``put together'' elements that are
separated by at least $n/4$ positions. }

What is proposed here is
a definition of LCS that makes the measure of similarity more
accurate because it enables forcing adjacent letters in the LCS to
be adjacent in both sequences. In our problem, the common
subsequence is required to consist of k length substrings. A
formal definition appears in Definition~\ref{d:LCSk}.

\begin{definition}\label{d:LCSk}
   \emph{The Longest Common Subsequence in k Length Substrings
    Problem ($LCSk$)}:\\
\begin{tabular}{ll}
 Input: &  Two  sequences $A= a_1a_2 \ldots a_n$,  $B=b_1b_2 \ldots b_n $  over alphabet
$\Sigma$.\\
   Output: & The maximal $\ell$ s.t. there are $\ell$ substrings,\\ \
    &  $a_{i_1}...a_{i_{1+k-1}}\ldots
a_{i_\ell}...a_{i_\ell +k-1}$, identical to
$b_{j_1}b_...{j_{1+k-1}}\ldots
b_{j_\ell}...b_{j_\ell +k-1}$ \\
& where $\{ a_{i_f} \}$ and $\{ b_{j_f}\}$ are in increasing order for $1\leq f \leq \ell $   and\\
&    where two $k$ length substrings in the same sequence, do not overlap.\\
\end{tabular}
\end{definition}

We demonstrate $LCSk$ considering the sequences appearing in
Figure \ref{f:pair example}.
\begin{figure}[h]
\begin{center}
A =  \begin{tabular}{llllllll}

       \footnotesize{1} & \footnotesize{2}  & \footnotesize{3} & \footnotesize{4} &\footnotesize{5} & \footnotesize{6} & \footnotesize{7} &\footnotesize{8}    \\
      T$ \quad $    & G$ \quad $   & C$ \quad $   & \textbf{G}$ \quad $    & \textbf{T}$ \quad $     &  G$ \quad $     & \textbf{T}$ \quad $    & \textbf{G}\\
\end{tabular}\\

B =  \begin{tabular}{llllllll} 
      \textbf{G}$ \quad $ & \textbf{T}$ \quad $ & T$ \quad $    & G$ \quad $      & \textbf{T}$ \quad $    &  \textbf{G}$ \quad $      & C$ \quad $   & C\\ 
\end{tabular}
\normalsize \caption{An LCS2 example }\label{f:pair example}
\end{center}
\end{figure}

A possible common subsequence in pairs $(k=2)$ is $G\ T\ T\ G$
obtained from $a_4,$ $a_5,$ $a_7,$ $a_8$ and $b_1,b_2,b_5,b_6$.
Though $a_6 = b_4$, and such a match preserves the order of the
common subsequence, it cannot be added to the common subsequence
in pairs, since it is a match of a single symbol. For $k=3$, one
of the  possible solutions is $T\ G\ C$ achieved  by matching
$a_1,$ $a_2,$ $a_3,$ with $b_5,b_6,b_7$. For $k = 4$ a possible
solution is $T\ G\ T\ G$ obtained from matching $a_5,$ $a_6,$
$a_7,$ $a_8$ and $b_3,b_4,b_5,b_6$. Note that in the last two
cases the solution contains a single triple and a single quadruplet.

The paper is organized as follows: Section~\ref{s:preliminaries}
gives some preliminaries. The solution for the $LCSk$ problem is
detailed in Section~\ref{s:algorithm}. In Section~\ref{s:editdistance} we refer to the complementary edit distance measure, $EDk$. Section~\ref{s:conclusion} concludes the paper.

\section{Preliminaries}\label{s:preliminaries}
 The LCSk problem is a generalization of the LCS problem. We might
consider using the solution of the latter in order to solve the
former. If we perform the LCS algorithm on the input sequences, we
can backtrack the dynamic programming table and mark the symbols
participating in the common subsequence. We can then check whether
those symbols appear in consecutive $k$ length substrings in both
input sequences, and delete them if not. Such a procedure
guarantees a common subsequence in $k$ length substrings but not
necessarily the optimal length of the common subsequence. For
example consider $LCS2$ of the sequences appearing on Figure
\ref{f:pair example}. Applying the LCS algorithm on these strings
may yield $T\ T\ G\ T\ G$, containing a single non-overlapping
pair matching while there exists LCS2 of  $T\ G\ T\ G$ having two
pair matchings. Hence, a special algorithm designed for
\emph{LCSk} is required.

As the LCSk problem considers matchings of $k$
consecutive symbols, throughout this
paper we call such a matching a $k\ matching$. We will also need the term \emph{predecessor}. We use the following definitions for these terms:\\

\begin{definition}\label{d:kmatch}
\begin{displaymath} kMatch(i,j) =
\left\{ \begin{array}{ll}
1 & if\ a_{i+f}= b_{j+f},\ for\ every \ 0 \leq f \leq k-1 \\
0 &  Otherwise\\
\end{array} \right.
\end{displaymath}
If $kMatch(i,j) = 1$, the matching substring is denoted (i,j).
\end{definition}

\begin{definition}\textbf{Predecessors.}
 Let \textit{$ candidates(i,j)$} be the set of all longest common subsequences, consisting of
 k matchings, of  prefix  $A[1...i+k-1]$ and  prefix  $B[1...j+k-1]$. Then let
\textit{$ pred(i,j)$} be  all the possible last  $k$ matchings in
 $candidates(i,j)$. That is, $pred(i,j) = \{ (s,t)| \exists c \in   candidates(i,j),\ where\ $ $(s,t) \ is\
 the\  last\  $ $  k\ matching\  in\ c\}$.

We define the \textit{length} of $p \in pred(i,j)$ derived from candidate $c$, to be the number of $k$ matchings in $c$ and denote it by $|p|$.
\end{definition}

\paragraph{\textbf{Example.}} Consider LCS2 of the sequences of Figure
\ref{f:pair example}. Let $candidates(5,3)$ be the common
subsequences in pairs of $B[1...4] = G\ T\ T \ G$  and of
$A[1...6]=T \ G\ C \    G  \    T\ G $, thus, $candidates(6,4)$
contains$ \{ TG, GT \} $. $TG$ can be obtained in two ways:
$a_1a_2 $ matched to $b_3b_4$, or $a_5a_6$ matched to $b_3b_4$,
and  $GT$ by $a_4a_5$ matched to $ b_1b_2$ therefore, we have
$pred(i,j) = \{ (1,3), (5,3), (4,1)\}$. In this example all
predecessors are of length 1. Keeping the predecessors enables
backtracking to reveal the longest common subsequence in $k$
length substrings itself.\\

The following Lemma is necessary for the correctness of the
solution.
\begin{lem}\label{l:length}
Let $p_1$,$p_2 \in pred(i,j) $, then if
 $|p_1| + 1 = |p_2|$,
 then any maximal common subsequence  of $k$ length substrings using the $k$ matching $p_2$ has
 length greater or equal to that using the $k$ matching $p_1$.
\end{lem}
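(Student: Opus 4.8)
The plan is to prove the inequality by an exchange argument on the two candidate chains, after reducing the statement to a comparison of their possible continuations. First I would fix a maximal common subsequence $S_1$ of $k$-length substrings that uses $p_1$, and decompose it at $p_1$ into a \emph{head} (the $k$-matchings up to and including $p_1$) and a \emph{tail} $T$ (the $k$-matchings occurring strictly after $p_1$ in both strings). By the definition of $|p_1|$, the head contributes at most $|p_1|$ matchings, so $|S_1| \le |p_1| + |T|$, and I may assume equality by taking the head to be a longest chain ending at $p_1$.

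Next I would build a common subsequence $S_2$ that uses $p_2$. Since $p_1,p_2 \in pred(i,j)$ and $|p_2| = |p_1| + 1$, there is a chain of $|p_1|+1$ non-overlapping $k$-matchings ending at $p_2$; I take this as the head of $S_2$ and attempt to reattach the tail $T$ after $p_2$. If every $k$-matching of $T$ still starts strictly after the end of $p_2$ in both strings, then $S_2$ has $(|p_1|+1) + |T| = |S_1| + 1$ matchings; otherwise I drop exactly those matchings of $T$ that now overlap $p_2$, and the key point is that at most one such matching is lost, giving $|S_2| \ge (|p_1|+1) + (|T|-1) = |S_1|$. In either case $|S_2| \ge |S_1|$. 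Since $S_2$ can be greedily completed to a maximal common subsequence using $p_2$ without shortening it, the longest maximal common subsequence using $p_2$ is at least $|S_1|$; choosing $S_1$ to be a longest maximal common subsequence using $p_1$ then yields the claimed inequality.

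The hard part will be justifying the ``at most one matching of $T$ is lost'' step, which is really a positional claim about $p_1$ and $p_2$. Because both are last $k$-matchings of candidate solutions for the \emph{same} prefix window, their ending positions both lie in $A[1\ldots i+k-1]\times B[1\ldots j+k-1]$, so I would first bound how far $p_2$ can end beyond $p_1$; combined with the non-overlap requirement, which forces consecutive matchings of $T$ to be separated by at least $k$ positions in each string, this should confine the overlap with $p_2$ to the single first matching of $T$. I expect the delicate case to be when $p_2$ ends later than $p_1$ in exactly one of the two strings, where I must check that the extra length guaranteed by $|p_2| = |p_1|+1$ still compensates for the one discarded tail matching; making this spacing-versus-window bound precise, rather than the exchange itself, is where the real work lies.
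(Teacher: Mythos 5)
You have located the crux correctly, but the route you sketch for the hard step cannot be completed, and the ingredient you are missing is precisely the one the paper's proof leans on. Write $p_1=(s,t)$, $p_2=(s',t')$. Your spacing argument does work when $p_2$ overlaps or precedes $p_1$ in \emph{both} sequences ($s'\le s+k-1$ and $t'\le t+k-1$): then every tail matching from the second one on starts at position at least $s+2k>s'+k-1$ in $A$ and at least $t+2k>t'+k-1$ in $B$, so only the first tail matching can be lost. But the facts you propose to use --- both predecessors end inside the window, the lengths differ by one, tail matchings are $k$-separated --- do not exclude the configuration in which $p_2$ lies entirely beyond $p_1$. Take $k=2$, $p_1=(1,1)$, $p_2=(4,4)$, so $|p_1|=1$ and $|p_2|=2$ via the chain $(1,1),(4,4)$, both ending inside any window with $i,j\ge 4$; a tail beginning $(3,3),(5,5)$ loses \emph{both} of its first two matchings when re-attached after $p_2$, and if the strings end at position $6$ the best subsequence through $p_1$ has $3$ matchings while the best through $p_2$ has $2$. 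So under your hypotheses the key claim, and indeed the lemma's conclusion, fails. What rules this configuration out is not geometry but the definition of $pred(i,j)$: candidates are \emph{longest} common subsequences of the prefix window, so if $p_2$ lay strictly after $p_1$ in both sequences, the candidate ending at $p_1$ could be extended by $p_2$ inside the window, contradicting its maximality. That is exactly case 3 of the paper's proof, and your proposal never invokes this property.

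The situation you flag at the end --- $p_2$ beyond $p_1$ in exactly one sequence while overlapping it in the other --- is indeed where the remaining difficulty lives, and it is not settled by spacing either: one tail matching can collide with $p_2$ in $A$ while the \emph{next} one collides with $p_2$ in $B$ (this needs $t'\ge t+k+1$, which overlap in $A$ alone does not forbid), again giving two losses; and candidate maximality does not immediately exclude the configuration, because $p_2$ overlaps $p_1$ in one coordinate and hence cannot extend $p_1$'s candidate. For comparison, the paper folds this situation into its case 4 and disposes of it with the assertion that every $k$ matching from $a_{s'+k}$ onward extends both candidates --- an assertion that tacitly ignores the $B$-coordinate --- so you will not find the missing argument there either. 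To turn your plan into a proof you need (i) the maximality argument above to kill the both-coordinates case, and (ii) a genuine argument, not a spacing bound, for the one-coordinate case; as it stands, the proposal establishes the easy overlap case and leaves exactly the cases that can break the inequality.
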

\begin{proof}
Suppose $p_1= (s,t)$ and $p_2 = (s',t')$. Several cases are
possible for $p_1, p_2$:
\begin{enumerate}
 \item If $s' < s$ and $ t' <
t$, then the candidate whose last $k$ matching is  $p_2$ might be
further extended till $A[s]$ and $B[t]$, enlarging the difference
between $p_1$ and $p_2$.
 \item If  $s' = s $ and $t' = t$ both predecessors have the same
 opportunities for extension.
 \item If $s + k -1 < s'$ and $ t + k - 1
< t'$,  the $k$ matching $(s',t')$ can be added to the candidate
whose last $k$ matching is $(s,t)$, contradicting its maximality.
 \item  If there is an overlap
between the $k$ matchings represented by the  predecessors, $s <
s' < s+ k$ or $ t < t' < t + k$,  starting from $a_{s'+k}$, every
$k$ matching can be used to extend the common subsequence in $k$
length substring, represented by both predecessors. However, the
subsequence using $p_1$ cannot have an additional $k$ matching
before $a_{s'+k}$, as overlaps are forbidden. Consequently, the
difference between the length of $p_1$ and $p_2$ is preserved in
the extended maximal common subsequences.
\end{enumerate} \QED
\end{proof}

\section{Solving the LCSk Problem}\label{s:algorithm}
 As in other LCS variants, we solve the problem using a dynamic
programming algorithm. We denote by $LCSk_{i,j}$  the longest
common subsequence, consisting of $k$ matchings in the prefixes
$A[1...i+k-1]$ and $B[1...j+k-1]$. Lemma \ref{l:kfilling} below,
formally describes
 the computation of $LCSk_{i,j}$.

\begin{lem}{\textbf{The LCSk Recursive Rule.}}\label{l:kfilling}
\begin{displaymath} LCSK_{i,j}=
max \left\{ \begin{array}{l}
      LCSk_{i,j-1}, \\
      LCSk_{i-1,j}, \\
   LCSk_{i-k,j-k} + kMatch(i,j) \\
   \end{array}
  \right.
\end{displaymath}
\end{lem}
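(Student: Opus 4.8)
The plan is to prove the identity by the two standard inequalities: first that each of the three arguments of the $\max$ is a lower bound for $LCSk_{i,j}$, and then that $LCSk_{i,j}$ cannot exceed their maximum. Throughout I would lean on one simple reachability observation about the prefixes $A[1\ldots i+k-1]$ and $B[1\ldots j+k-1]$: any $k$ matching $(s,t)$ must satisfy $s+k-1\le i+k-1$ and $t+k-1\le j+k-1$, i.e. $s\le i$ and $t\le j$. Consequently the only matching that can cover the last symbol $a_{i+k-1}$ is one starting at $s=i$, and the only matching that can cover $b_{j+k-1}$ is one starting at $t=j$; moreover, by non overlap such a boundary matching is unique if it exists.

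For the lower bound I would argue by monotonicity and by explicit extension. Any valid configuration of $k$ matchings for the shorter prefixes $A[1\ldots i+k-1],\,B[1\ldots j+k-2]$ is also valid for $A[1\ldots i+k-1],\,B[1\ldots j+k-1]$, giving $LCSk_{i,j}\ge LCSk_{i,j-1}$, and symmetrically $LCSk_{i,j}\ge LCSk_{i-1,j}$. For the third term I recall that $LCSk_{i-k,j-k}$ is the optimum over $A[1\ldots i-1],\,B[1\ldots j-1]$. When $kMatch(i,j)=1$ I take an optimal such configuration and append the matching $(i,j)$; since every previous matching lies inside $A[1\ldots i-1]$ and $B[1\ldots j-1]$ while $(i,j)$ occupies $A[i\ldots i+k-1],\,B[j\ldots j+k-1]$, the append creates no overlap and preserves the increasing order, so $LCSk_{i,j}\ge LCSk_{i-k,j-k}+1$. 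When $kMatch(i,j)=0$ the third term reduces to $LCSk_{i-k,j-k}$, which is dominated by monotonicity.

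For the upper bound I would fix an optimal configuration $S$ attaining $LCSk_{i,j}$ and split on its use of the two boundary positions. If no matching of $S$ starts at $s=i$, then by the observation $a_{i+k-1}$ is unused, so $S$ is a valid configuration for $A[1\ldots i+k-2],\,B[1\ldots j+k-1]$ and $LCSk_{i,j}\le LCSk_{i-1,j}$; the symmetric case with no matching starting at $t=j$ gives $LCSk_{i,j}\le LCSk_{i,j-1}$. The remaining case is that $S$ contains a matching $(i,t_0)$ starting at $s=i$ and a matching $(s_0,j)$ starting at $t=j$. Here I claim these must be the single matching $(i,j)$; removing it leaves a valid configuration inside $A[1\ldots i-1],\,B[1\ldots j-1]$, whence $LCSk_{i,j}-1\le LCSk_{i-k,j-k}$ with $kMatch(i,j)=1$, matching the third term.

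The heart of the argument, and the step I expect to be the main obstacle, is proving that these two boundary matchings coincide, since this is exactly where the $k$ substring and non overlap structure must be reconciled with order preservation. Suppose they were distinct. All $A$ start positions in $S$ are at most $i$ by the observation, and a second matching cannot also start at $i$ without overlapping $(i,t_0)$; hence $s_0<i$, so $(s_0,j)$ precedes $(i,t_0)$ in the $A$ order. Order preservation then forces $(s_0,j)$ to precede $(i,t_0)$ in the $B$ order as well, i.e. $j<t_0$, contradicting the bound $t_0\le j$ from the observation. Thus $s_0=i$ and $t_0=j$, the three cases are exhaustive, and combining the two bounds yields the stated recursion. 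Note that Lemma~\ref{l:length} is not needed for the recursion itself; it supports the correctness of the predecessor based backtracking layered on top of this rule.
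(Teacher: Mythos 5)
Your proof is correct, and it is substantially more rigorous than the argument the paper itself gives. The paper's proof of Lemma~\ref{l:kfilling} is a constructive sketch: it observes that a $k$ matching at $(i,j)$ can only sit on top of a solution confined to the prefixes $A[1..i-k]$ and $B[1..j-k]$, that a solution whose last matching is $(s,j)$ with $s<i$ or $(i,t)$ with $t<j$ falls under the first two terms, that the diagonal options $LCSk_{i-f,j-f}$ for $1\leq f\leq k-1$ are absorbed by $LCSk_{i-1,j}$ and $LCSk_{i,j-1}$, and then defers all verification to the remark that the claims ``can be easily proven using induction.'' Your two-inequality structure supplies exactly what that sketch omits. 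The lower-bound half (monotonicity, plus appending $(i,j)$ to an optimum for $A[1..i-1]$, $B[1..j-1]$) mirrors the paper's constructive direction. The genuinely new content is in your upper bound: the coincidence claim --- that an optimal configuration containing both a matching starting at row $i$ and a matching starting at column $j$ must contain the single matching $(i,j)$, proved via order preservation, since $s_0<i$ would force $j<t_0\leq j$ --- is precisely the exhaustiveness step the paper never argues, and it is what guarantees that the three terms of the max cover every optimal solution. Likewise, your observation that non-overlap together with $s\leq i$, $t\leq j$ forces every other matching to satisfy $s\leq i-k$, $t\leq j-k$ is the missing justification for charging the third case to $LCSk_{i-k,j-k}$. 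You are also right that Lemma~\ref{l:length} plays no role in the recursion itself; the paper invokes it only for the predecessor bookkeeping and backtracking layered on top of this rule.
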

\begin{proof}
$LCSK_{i,j}$ contains the maximal number of common $k$
length substrings, preserving their order in the input sequences.
A possible subsequence can be constructed by matching the
substrings $a_i,\ldots , a_{i+k-1}$ with $b_j,\ldots , b_{j+k-1}$,
in case all $a_{i+f}$ and  all $b_{j+f}$, for $0\leq f\leq k-1$,
are not part of previous  $k\ matching$s. This is guaranteed
when considering the prefixes $A[1..i-k]$ and $B[1..j-k]$ while
trying to  extend by one the common subsequence for cell
$LCSk_{i,j}$. Another option of extending the subsequence is by
using the $k$ matching $(s,j)$ , for $s < i$. Similarly, we can
use the k matching $(i,t)$  for $t < j$. Note that the options of
extending $LCSk_{i-f, j-f}$, for $1\leq f\leq k-1$ is included in
both $ LCSk_{i,j-1}$ and $ LCSk_{i-1,j}$. These claims can be
easily proven using induction. \QED
\end{proof}

According to Lemma \ref{l:kfilling} we can solve the $LCSk$
problem using a dynamic programming algorithm working on a two
dimensional table of size $(n-k+1)^2$ where the rows represent the
$A$ sequence and the columns stand for sequence $B$. Cell
$LCSk[i,j]$ contains the value $LCSk_{i,j}$ and the appropriate
predecessors. Nevertheless, when we wish to attain the common
subsequence itself, we encounter a complication.

In the original LCS algorithm, computing the common subsequence,
requires maximizing three options of possible prefixes of the LCS.
When some of these prefixes have the same length, there is no
significance which of them is chosen, as a single common
subsequence is sought  and the selection has no effect on future
matches. However, in the $LCSk$ problem the situation is
different. For example, consider $LCS2$ for the strings of Figure
3.
\begin{figure}[h]\label{f:notLCS}
\begin{center}
A = $\quad $ \begin{tabular}{lllll}
  \footnotesize{1} & \footnotesize{2}  & \footnotesize{3} & \footnotesize{4} & \footnotesize{5}     \\
   G $ \qquad $   & C $ \qquad $  & G   $\qquad  $  & T $ \qquad $  &   C     \\
\end{tabular}\\

$\qquad \  $B = $\quad $ \begin{tabular}{lllll} 
       C  $ \qquad $ & G $ \qquad $   & C  $ \qquad $     & G $ \qquad $    &  T $ \qquad $  \\ 
\end{tabular}
\normalsize \caption{An LCS2 example }
\end{center}
\end{figure}
  LCS2[3,3] equals 1 due to the $2Match$
$(1,2)$ (matching $a_1a_2$ to $b_2,b_3$)  ($G\ C$), or by the
$2Match$ of $(2,1)$ ($C\ G$). In spite of the fact that both
common subsequences share the same length, the former is part of
the final solution as it enables a further $2Match$ at $(3,4)$
while the latter cannot be extended due to the  overlap
restriction. It, therefore, seems that all possibilities of common
subsequence in $k$ length substrings, that is, all predecessors
should  be saved at every calculation in order to enable a correct
backtracking of the optimal solution. As the dynamic programming
proceeds, this information can exponentially increase.
Nevertheless, we prove in Lemma \ref{l:kelimination} that in the
$LCSk$ problem only one maximal previously computed subsequence is
required. The three options of forming $LCSk_{i,j}$, as appear in
Lemma \ref{l:kfilling}, compile $candidates(i,j)$, hence
$pred(i,j)$. Therefore, the $pred(i,j)$ set should be updated
after computing $LCSk_{i,j}$.

\begin{cor}\label{c:candidates union} If \ \ $LCSk_{i,j-1} = LCSk_{i-1,j} =
LCSk_{i-2,j-2}+1$, and kMatch(i,j)=1 then $\ pred(i, j) = pred(i,
j-1)\ \bigcup  pred(i-1, j)\ $
$\bigcup \ (i,j)$.\\

If \ \ $LCSk_{i,j-1} = LCSk_{i-1,j}$  and kMatch(i,j)=0 then\\
$ \ pred(i, j) = pred(i, j-1)\ \bigcup pred(i-1, j)\ $.\\

In both cases, if one or more of the relevant
$LCSk_{x,y},$ \small{$x\leq i, y\leq j $} has shorter length, its
corresponding $pred$ is not included in $pred(i, j)$.
\end{cor}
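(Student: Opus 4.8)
The plan is to read off the predecessor recurrence directly from the value recurrence of Lemma~\ref{l:kfilling} together with the definitions of $candidates$ and $pred$. The organizing principle is that, by definition, each element of $pred(i,j)$ is the last $k$-matching of some longest common subsequence $c$ of the prefixes $A[1\ldots i+k-1]$ and $B[1\ldots j+k-1]$, so I would classify every such $c$ according to whether its last $k$-matching uses the final symbols $a_{i+k-1}$ and $b_{j+k-1}$.

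First I would record the monotonicity facts $LCSk_{i-1,j}\le LCSk_{i,j}$ and $LCSk_{i,j-1}\le LCSk_{i,j}$, which hold because any common subsequence in $k$-length substrings of a prefix pair is also one of any longer prefix pair. The key structural observation is then that a $k$-matching covering $a_{i+k-1}$ must have the form $(i,t)$ and one covering $b_{j+k-1}$ must have the form $(s,j)$; consequently, if the last matching of $c$ is not $(i,j)$ then $c$ avoids $a_{i+k-1}$ or avoids $b_{j+k-1}$. In the former case $c$ is a common subsequence of $A[1\ldots i+k-2]$ and $B[1\ldots j+k-1]$ of length $LCSk_{i,j}$, and together with the monotonicity bound this forces $LCSk_{i-1,j}=LCSk_{i,j}$ and places $c$ in $candidates(i-1,j)$, so its last matching lies in $pred(i-1,j)$; the symmetric case yields $pred(i,j-1)$.

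Next I would prove the converse inclusions. Whenever $LCSk_{i-1,j}=LCSk_{i,j}$, every candidate of $(i-1,j)$ is itself a longest common subsequence of the $(i,j)$ prefixes, so $pred(i-1,j)\subseteq pred(i,j)$, and symmetrically for $pred(i,j-1)$. Whenever $kMatch(i,j)=1$ and the diagonal branch of Lemma~\ref{l:kfilling} attains the maximum, i.e.\ $LCSk_{i-k,j-k}+1=LCSk_{i,j}$, appending the matching $(i,j)$ to any candidate for $LCSk_{i-k,j-k}$ (whose prefixes are exactly $A[1\ldots i-1]$ and $B[1\ldots j-1]$, so no overlap with $(i,j)$ occurs) produces a candidate whose last matching is $(i,j)$, giving $(i,j)\in pred(i,j)$. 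Splitting on $kMatch(i,j)=1$ versus $kMatch(i,j)=0$ (in which case no candidate can end at $(i,j)$) now reproduces the two displayed identities, and the closing sentence follows because a neighbor whose value is strictly smaller than $LCSk_{i,j}$ supplies only subsequences too short to lie in $candidates(i,j)$, so its predecessor set is excluded.

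The step I expect to be the main obstacle is the forward direction above: showing that a longest common subsequence of the $(i,j)$ prefixes not ending at $(i,j)$ is genuinely a \emph{longest} common subsequence of the reduced prefix pair, rather than merely some common subsequence. This is precisely where the monotonicity inequalities are needed to promote ``common subsequence of length $LCSk_{i,j}$'' to ``member of $candidates$'', and one must also check that the length label $|p|$ is preserved under this identification, which is what legitimizes discarding the predecessors of strictly shorter neighbors.
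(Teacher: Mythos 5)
Your proof is correct, and it takes a genuinely different route from — and is more rigorous than — the paper's own proof. The paper disposes of the corollary in a few lines: the forward containment (that nothing outside the three sets can arise) is treated as immediate from the recursive rule of Lemma~\ref{l:kfilling}, the exclusion of shorter neighbors is delegated to Lemma~\ref{l:length} (extensions of a longer predecessor are never worse, so shorter ones need not be considered), and the retention of all equal-length predecessors is argued operationally: without lookahead one cannot tell which predecessor the final optimum will use, so all must be kept. You instead prove the stated set equality by double inclusion directly from the definitions of $candidates$ and $pred$: forward, by classifying the last $k$-matching of an arbitrary candidate according to whether it covers $a_{i+k-1}$ or $b_{j+k-1}$, using monotonicity of $LCSk$ to promote such a candidate into $candidates(i-1,j)$ or $candidates(i,j-1)$; conversely, by lifting neighbor candidates into $candidates(i,j)$ and by appending $(i,j)$ to a candidate of $(i-k,j-k)$, whose prefixes $A[1\ldots i-1]$, $B[1\ldots j-1]$ preclude any overlap. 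Notably, you never invoke Lemma~\ref{l:length}: you exclude the predecessor sets of strictly smaller neighbors simply because their candidates are too short to belong to $candidates(i,j)$, whereas the paper excludes them via the extension argument of that lemma. What your route buys is a self-contained, genuinely two-sided proof of the identity, including the otherwise tacit points that the diagonal branch contributes exactly $(i,j)$ and that the length label $|p|$ is preserved when a candidate is reinterpreted over a smaller prefix pair; what the paper's route buys is brevity and an emphasis on the algorithmic content — which predecessors must be physically stored so that backtracking remains correct — which is the form in which the corollary is actually used later in the Elimination Lemma.
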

\begin{proof}
Note, that the length of a predecessor $p\in pred(i,j)$
equals the value of $LCSk_{i,j}$. Due to Lemma \ref{l:length}
there is no necessity to consider the shorter predecessors.
 Suppose all three sets contain predecessors representing
common subsequences of the same length. Without further
information, we cannot determine which common subsequence ending
in $pred(i, j-1),\ pred(i-1, j),$
 or in $k$ matching of $(i,j)$, will be in the maximal output,
therefore, all  predecessors must be considered. \QED
\end{proof}

\subsection{The Backtrack Process}
Using the recursive rule of Lemma \ref{l:kfilling}, the value
computed for $LCSk_{i-k+1, j-k+1}$ is the length of the common
subsequence in $k$ length substrings of sequences $A$ and $B$. In
order to obtain the common subsequence itself we perform the
following procedure. Consider the value saved in cell $LCSk[i,j]$,
where $i$ and $j$ are initialized by $n - k + 1$. We suppose that
each cell contains a single predecessor, as will be proven
hereafter in Lemma \ref{l:kelimination}. Let the predecessor saved
in the current cell be $(x,y)$. Two cases are regarded as long as
$i,j
>0$.
\begin{enumerate}
\item  if $x = i$ and $y = j$, then  a $k$ matching starts in
these indices, therefore $a_{i+f}\ for\ every \ 0 \leq f \leq k-1$
can be added to the constructed output, preserving the increase of
the indices. In order to proceed we decrease both $i, j $ by $k$
to avoid previous $k$ matchings overlapping $(i,j)$.\\
\item Otherwise, no $k$ matching occurs in the current indices.
The predecessor $(x,y)$ directs us to the cell containing a $k$
matching which is part of an LSCk with the value $LCSk_{i,j}$.
Therefore, we decrease the indices $i = x$ and $j = y$.
\end{enumerate}

\subsection{Predecessors Elimination}\label{ss:elimination}
We aim at minimizing the number of predecessors per $LCSk[i,j]$
and therefore define a process of predecessors elimination.
Eliminating a predecessor $p$, that is, deleting it from $pred(i,
j)$ can be safely done if a maximal common subsequence in $k$
length substrings of the same length  can be attained using
another predecessor from $pred(i,j)$. Lemma \ref{l:kelimination}
provides the elimination procedure and its correctness.

\begin{lem}\label{l:kelimination}\textbf{Elimination Lemma.}
Let $p_1$, $p_2  \in pred(i,j)$ be $k\ matching$s, where $|p_1| =
|p_2|$, then one of $p_1, p_2$ can be arbitrarily eliminated.
\end{lem}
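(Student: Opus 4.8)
The plan is to show that $p_1$ and $p_2$ make identical contributions to every cell computed after $(i,j)$, so that discarding either one leaves the value of every subsequent cell — and hence the reconstructed solution — unchanged. First I would record the fact already noted after Corollary~\ref{c:candidates union}: the length of any predecessor in $pred(i,j)$ equals $LCSk_{i,j}$, so that $|p_1|=|p_2|=LCSk_{i,j}$, and write $p_1=(s,t)$, $p_2=(s',t')$.

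Next I would isolate the only two ways a predecessor of cell $(i,j)$ can influence the rest of the computation, reading them off the recursive rule of Lemma~\ref{l:kfilling}. The first is propagation: through the terms $LCSk_{i,j-1}$ and $LCSk_{i-1,j}$, the set $pred(i,j)$ is copied unchanged into the neighbouring cells $(i,j+1)$ and $(i+1,j)$, subject only to the length-based pruning of Corollary~\ref{c:candidates union}. The second is diagonal extension: through the term $LCSk_{i-k,j-k}+kMatch(i,j)$, the set $pred(i,j)$ is used to extend the single $k$ matching $(i+k,j+k)$ whenever $kMatch(i+k,j+k)=1$. The key observation is that every member $(s,t)\in pred(i,j)$ satisfies $s\le i$ and $t\le j$, since its matched substring lies inside the prefixes $A[1\ldots i+k-1]$ and $B[1\ldots j+k-1]$; hence $s+k\le i+k$ and $t+k\le j+k$, so the matching $(i+k,j+k)$ may be appended after any such predecessor without overlap. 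Thus the extension is valid uniformly for all members of $pred(i,j)$, independently of their exact positions.

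With these two observations the argument closes quickly. Both operations are governed purely by the cell indices and by the common length $LCSk_{i,j}$, never by the individual coordinates of a predecessor, and the pruning of Corollary~\ref{c:candidates union} keeps or discards predecessors according to length alone. Since $|p_1|=|p_2|$, the two predecessors are propagated to exactly the same cells and are extendable by exactly the same diagonal matchings, always yielding candidates of equal length. Consequently, for every later cell the collection of attainable maximal candidates is the same whether we keep $p_1$, keep $p_2$, or keep both, and by the backtrack process either choice reconstructs an optimal common subsequence of the same length. Therefore one of $p_1,p_2$ may be eliminated arbitrarily.

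I expect the main obstacle to be dispelling the natural worry that, because $p_1$ and $p_2$ end at different positions, one of them might admit a later $k$ matching that the other forbids by the overlap rule — a ``gap'' matching lying strictly between the two frontiers. The resolution, which I would emphasise, is that such a matching is never introduced through cell $(i,j)$ at all: in the dynamic program a $k$ matching $(u,v)$ is appended only via the diagonal predecessor set $pred(u-k,v-k)$, so any extension that a lower-frontier predecessor enables is already accounted for at an earlier cell, while cell $(i,j)$ itself only ever extends its own diagonal matching $(i+k,j+k)$, for which, as shown, all of its predecessors are equally admissible.
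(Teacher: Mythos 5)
Your proposal is correct, but it takes a different route than the paper. The paper proves the Elimination Lemma backwards, through the backtracking procedure: it splits on $kMatch(i,j)=0$ versus $kMatch(i,j)=1$ with $p_2=(i,j)$, shows that whenever backtracking actually visits cell $[i,j]$ the most recently committed $k$ matching starts at $(i+k,j+k)$ or beyond, so by Corollary~\ref{c:candidates union} neither equal-length predecessor can overlap it, and separately observes that if no optimal solution uses the matching $(i,j)$ then backtracking never visits cell $[i,j]$ at all, so the retained predecessor there is irrelevant. You argue forwards instead: a predecessor of $(i,j)$ can affect the remaining computation only by being copied right/down (pruned by length alone) or by licensing the diagonal extension at $(i+k,j+k)$, and since every member of $pred(i,j)$ has coordinates at most $(i,j)$, that extension is valid for all members uniformly; your closing paragraph about ``gap'' matchings being introduced via $pred(u-k,v-k)$ at earlier cells is precisely the forward counterpart of the paper's sub-case in which cell $[i,j]$ is never visited. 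Both arguments rest on the same two facts---predecessor coordinates are bounded by the cell's coordinates, and a matching $(u,v)$ is appended only via $pred(u-k,v-k)$---but yours avoids the case analysis and reads as a clean exchange argument over the dynamic program, while the paper's directly certifies the backtracking procedure, which is the only place predecessors are actually consumed, and so is tied more explicitly to the algorithm's output. Two imprecisions in yours are worth fixing: the values $LCSk_{i,j}$ never depend on predecessors at all, so ``the value of every subsequent cell is unchanged'' is vacuous rather than something to prove; and the reconstructed solution is not literally ``unchanged'' when $p_1$ is swapped for $p_2$---what your argument actually shows, and what is needed, is that backtracking still yields a valid, non-overlapping solution of optimal length, which requires the invariant (last committed matching starts at or after $(i+k,j+k)$ when cell $[i,j]$ is visited) that you assert only implicitly at the end.
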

\begin{proof}
Let  $p_1= (s,t)$ , $p_2= (s', t')$.
In case $kMatch(i,j) = 0$ then,  if the backtracking  pass through
table cell [i,j]  it implies that the previously found $k$
matching is $(i+k, j+k)$ due to the second case of the
backtracking procedure. Moreover, according to Corollary
\ref{c:candidates union},  both $\{s, s'\} \leq i$ and $\{t , t'\}
\leq j$. As a consequence, there is no preference to one of the
equal length predecessors as both cannot overlap the previous $k$
matching.

Suppose then that
$kMatch(i,j)=1$ and $p_2 = (i,j)$. According to the backtracking
procedure, we get to cell $[i,j]$ either by the first case of the
procedure where there is a $k$ matching $(i+k, j+k)$ or by its
second case where at cell $[i', j']$ there is no $k$ matching but
it contains a predecessor (i,j). The latter implies that the
previously found $k$ matching is $(i + k + f, j + k + h)$ for $f,h
> 0$.

There are two cases to consider.
\begin{enumerate}
\item If no optimal solution uses the $k$ matching $(i,j)$ it
implies that the optimal solution includes  $k$ matchings $(i',
j')$ and $(i'', j'')$ where $i' < i < i'+k$  and $i'' - k < i <
i''$ or $j' < j < j'+k$  and $j'' - k < j < j''$. If only one
inequality holds for $i$ or $j$ then some optimal solution will
include $(i,j)$, contradicting the case definition. According to
the first case of the backtrack procedure, when backtracking from
cell [i'', j''], including the $k$ matching $(i'',j'')$, we
decrease both indices by $k$. Since $i''-k < i$ and $ j''-k < j$
cell [i,j] will not be considered, therefore even if we saved
$p_2$, that is we eliminated $p_1$, it has no
consequence on the optimal solution.\\
\item If there exists an optimal solution including $p_2$ but we
arbitrarily eliminated it.  Since we proved that the previously
found $k$ matching is $(i + k + f, j + k + h)$ for $f,h \geq 0$
there is no preference to $p_2$ over $p_1$ as they are both of the
same length and both do not overlap the previously found $k$
matching according to Corollary \ref{c:candidates union}.
Apparently, $p_1$ is included in another optimal solution.
\end{enumerate}\QED
\end{proof}

The $kMatch$ function requires $k$ symbol matching for each table
entry. Nevertheless, matches can be enlarged only on the diagonal
of the table, from cell $[i, j]$ to cell $[i+1,j+1]$. We therefore
suggest to save at every table cell $ [i,j]$ a diagonal counter,
named $dcount$, counting the length of the longest match between
the suffixes of $A[1 \ldots i]$ and $B[1 \ldots j]$.

While filling the dynamic programming table, at cell $[i,j]$ we
compare merely $a_i$ with $b_j$ and assign $dcount[i,j]$ its value
according to the following definition:
\begin{definition} \label{d:dcount}
\begin{displaymath} dcount[i,j] =
\left\{ \begin{array}{ll}
1+ dcount[i-1,j-1] & \quad if\ a_{i} = b_{j} \\
0 & \quad a_i \neq b_j\\
\end{array} \right.
\end{displaymath}
\end{definition}
Now, instead of using the $kMatch$ function when computing the
score of $LCSk[i,j]$, we need only compare $dcount[i+k-1, j+k-1]$
with $k$. The situation of $dcount[i+k-1, j+k-1]
> k$ occurs when there is an overlap between some matched
substrings, but this is handled by the Elimination Lemma. Note
that we need to proceed with the $dcount$ computations $k$ rows
and columns ahead during the LCSk computation.

\paragraph{\textbf{Example.}}
Figure 3 depicts an LCS2 table. We demonstrate the two cases in
 Lemma \ref{l:kelimination} where $kMatch(i,j)=1$.
For the first case, consider cell $LCS2[5,6]$ including
$pred_{5,6} = \{ (4,5),(5,6) \}$. Suppose we arbitrarily eliminate
$(4,5)$. The LCS2 may contain the 2 matching $(5,6)$ that overlaps
with (4,1),(4,5) and on the same time overlaps also $(6,7)$ what
can decrease the length of the solution. Nevertheless, according
to the backtracking procedure, after considering $LSC2[6,7]$ we
decrease the indices and go to $LCS2[4,5]$ in which $(5,6)$ cannot
exist, due to Corollary \ref{c:candidates union}.

For the second case consider cell $LCS2[2,3]$ including
$pred_{2,3} = \{ (1,1),$ $(2,3)\}$. (2,3) is included in one of
the optimal solution. Suppose we eliminated it and retained
$(1,1)$. The backtracking path goes through cells $[7,7]$ to
$[6,7]$ to $[4,5]$ to $[2,3]$ where it finds a non overlapping
predecessor $(1,1)$ with the same length as the deleted $(2,3)$.

\begin{flushleft}
\begin{figure}[h,t]\label{f:table}
\begin{center}
\footnotesize
\begin{tabular}{|cc|c|c|c|c|c|c|c|c|}
  \hline
 &   &  1     & 2         & 3          & 4         &5         & 6        & 7        &8    \\
 &   &  C    & T          & T          & G         & C        &  T       & T       & T\\ \hline
 &   &   1   & 1          & 1          & 1         &1         & 1        &1        & - \\
1 $\ $ &  C & (1,1) & (1,1)      &(1,1)        & (1,1)    &(1,1)(1,5)&(1,5)     &(1,5)    & - \\
 &   &   dc=1   & dc=0          & dc=0          & dc=0         &dc=1         &dc=0        &dc=0        & dc=0
 \\\hline
&  &    1  & 1          & 1          & 1         &1         & 1        &1        & - \\
2$\ $ &  T & (1,1) & (1,1)      &(1,1)(2,3)  &(1,1) &(1,1)(1,5)
&(1,5)   &  (1,5)& - \\
 &   &   dc=0   & dc=2          & dc=1          & dc=0         &dc=1         &dc=2        &dc=1        &
 dc=1  \\\hline
&  &  1    & 1          &1           & 1         &1         & 1        &1         & - \\
 3$\ $& G  &  (1,1)& (1,1)      &  (1,1)     & (1,1)    &(1,1)(1,5) &(1,5)     &(1,5)     & - \\
  &   &   dc=0   & dc=0          & dc=0          & dc=2         &dc=0         &dc=0        &dc=0        &
 dc=0  \\\hline
&  & 1     &  1         & 1          & 1         &2         & 2        &2         & - \\
 4$\ $& C  & (4,1) &  (4,1)     &(1,1),(4,1) & (4,1)    &(4,5)     &(4,5)     &(4,5) & - \\
   &   &   dc=1   & dc=0          & dc=0          & dc=0         &dc=3         &dc=0        &dc=0        &
 dc=0  \\\hline
&  &  1    & 1          & 1          & 1         & 2        & 2        &2         & - \\
5$\ $ & T  & (4,1) &(4,1),(5,2) &(4,1)       & (4,1)    &(4,5)
&(4,5),(5,6)&(4,5),(5,7)& - \\
   &   &   dc=0   & dc=2          & dc=1          & dc=0         &dc=0         &dc=4        &dc=1        &
 dc=1 \\\hline
&  &  1    &1           & 1          & 1         &2         &2         &3         & - \\
 6$\ $& T  & (4,1) &(4,1),(6,2) &(4,1)       & (4,1)    &(4,5)     &(4,5),(6,6)&(6,7)     & - \\
    &   &   dc=0   & dc=1          & dc=3          & dc=0         &dc=0         &dc=1        &dc=5        &
 dc=2 \\\hline
&  &  1    &  1         &  2         & 2          &2        & 2        & 3        & - \\
 7$\ $&  T & (4,1) &(4,1)       &(7,3)       & (7,3)    &(7,3),(4,5)&(7,3)(6,6)&(6,7)     & - \\
     &   &   dc=0   & dc=1          & dc=2          & dc=0         &dc=0         &dc=0        &dc=2        &
 dc=6 \\\hline
8$\ $& G &  -    &  -        &     -      &    -      &  - & - & -
& - \\
    &   &   dc=0   & dc=0          & dc=0          & dc=3         &dc=0         &dc=0        &dc=0        &
 dc=0 \\\hline
\end{tabular}
\end{center}
\normalsize \caption{An LCS2 Table. The numbers represent the
length of the common subsequence. The pairs in parenthesis stand
for the predecessors. Each cell contains all possible predecessors
according to Corollary \ref{c:candidates union}. Due to the
Elimination Lemma only one predecessor is retained. The diagonal
counter $dcount$ is denoted by dc.}
\end{figure}

\end{flushleft}

\begin{thm}\label{t:k time} The $LCSK(A,B)$ problem can be solved in $O(n^2)$ time and $
O(kn)$ space, where $n$ is the length of the input sequences $A$,
$B$. Backtracking the solution requires time of $O(\ell )$ where
$\ell $ is the number of $k$ matchings in the solution, and
$O(n^2)$ space.
\end{thm}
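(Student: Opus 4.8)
The plan is to verify the four quantities in the statement separately --- the time and space to compute the value $LCSk_{n-k+1,n-k+1}$, and the time and space to reconstruct an optimal subsequence --- drawing on the recursive rule (Lemma~\ref{l:kfilling}), the $dcount$ reformulation of $kMatch$ (Definition~\ref{d:dcount}), and the Elimination Lemma (Lemma~\ref{l:kelimination}). For the time bound I would first note that the table has $(n-k+1)^2 = O(n^2)$ cells, and then argue that each cell is filled in $O(1)$ time. The maximization in Lemma~\ref{l:kfilling} ranges over three already-computed entries; the predicate $kMatch(i,j)=1$ is replaced by the single comparison $dcount[i+k-1,j+k-1]\ge k$, and each $dcount$ entry is obtained in $O(1)$ from its diagonal predecessor. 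The decisive point is that Lemma~\ref{l:kelimination} lets us store a single predecessor per cell: without it the $pred$ sets could grow exponentially, as observed before Corollary~\ref{c:candidates union}, whereas with it the per-cell predecessor update is also $O(1)$. Summing over all cells yields $O(n^2)$.

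The $O(kn)$ working-space bound is where the real work lies, and I would obtain it from a dependency analysis of Lemma~\ref{l:kfilling}. Since cell $[i,j]$ refers only to rows $i$, $i-1$ and $i-k$, it suffices to retain a sliding window of the $k+1$ most recent rows of the $LCSk$ table instead of all $n-k+1$ rows. By the Elimination Lemma each retained cell carries only its value and one predecessor, i.e.\ $O(1)$ words, so the window occupies $O(kn)$ space. The auxiliary $dcount$ values required at $[i,j]$ live at $[i+k-1,j+k-1]$, so they must be produced $k-1$ rows ahead; but since each $dcount$ entry depends only on its diagonal predecessor, maintaining a bounded window of $dcount$ rows (again $O(k)$ rows of $O(n)$ entries) is enough. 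Both contributions are $O(kn)$, giving the claimed space bound for computing the length.

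Finally I would analyze the reconstruction of Section~\ref{ss:elimination}. Starting at $[n-k+1,n-k+1]$ with one predecessor per cell, each step either (case~1) emits a $k$ matching and decreases both indices by $k$, or (case~2) jumps to the predecessor cell $(x,y)$. I would show that a case~2 jump always lands on a cell whose stored predecessor is $(x,y)$ itself --- a genuine $k$-matching position --- so it is immediately followed by a case~1 emission; this uses the argument that $LCSk_{x,y}=LCSk_{i,j}$ and $LCSk_{x-k,y-k}=LCSk_{x,y}-1$, forcing $(x,y)\in pred(x,y)$ with maximal length. Hence each of the $\ell$ matchings of the optimum is recovered in $O(1)$ backtrack steps, for $O(\ell)$ total, with correctness guaranteed by Lemma~\ref{l:kelimination}. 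The main obstacle, and the reason the space figure changes, is precisely here: backtracking follows predecessor pointers that may reach across arbitrarily many rows, so the sliding window no longer suffices and all $(n-k+1)^2$ predecessors must be stored, giving $O(n^2)$ space. The delicate point to nail down is this tension --- the $O(kn)$ bound is available only for the length, and one must justify that reconstruction genuinely needs the full table while still running in $O(\ell)$ time.
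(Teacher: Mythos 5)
Your proposal is correct and follows essentially the same route as the paper's proof: $O(1)$ work per cell of the $(n-k+1)^2$ table using $dcount$ in place of $kMatch$ and a single stored predecessor justified by the Elimination Lemma, a sliding window of $O(k)$ rows for the $O(kn)$ length-computation space bound, and retention of the full predecessor table for the $O(\ell)$-time, $O(n^2)$-space backtracking. The only difference is one of explicitness: you spell out the row-dependency/window argument and the claim that each case-2 jump lands on a cell that immediately emits a $k$ matching, both of which the paper states more tersely.
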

\begin{proof}
The algorithm fills a table of size $(n - k + 1)^2$. Each entry is
filled according to Lemma \ref{l:kfilling} by performing a
constant number of comparisons in addition to applying
$kMatch(i,j)$. Thanks to the usage of $dcount$, the $KMatch$ is
also reduced to a constant time operations as can be seen at
Definition \ref{d:dcount}.
 In addition, unifying three $pred$ sets of size one each
does not increase the time requirements per entry. The Elimination
procedure requires also constant time according to Lemma
\ref{l:kelimination}.

All in all, constant time operations are
performed for each of the table entries, concluding in $O(n^2)$
time requirement for computing the optimal length of the common
subsequence in $k$ length substrings.

During the backtracking process we go through the cells
representing the $k$ matchings of one optimal solution. If the
difference between two such $k$ matchings is more than $k$, we
go through an intermediate cell whose predecessor directs us
to the next $k$ matching. Hence finding the common subsequence in
$k$-length substrings requires $O(\ell)$ where $\ell$ is the
number of $k$ matchings in the solution.\\

\textbf{Regarding space:} Each of the $n^2$ entries contains,
according to Corollary \ref{c:candidates union} three predecessors
and the Eliminate function, due to Lemma \ref{l:kelimination},
results in a single predecessor before considering further
entries, implying $O(n^2)$ space requirement. Nevertheless, due to
Lemma \ref{l:kfilling}, during the computation of $LCSk[i,j]$ we
need only row  $i-k$ and column $j-k$ as well as cell
$[i+k-1,j-k+1]$ for its $dcount$ value. As a consequence, at each
step we save  $O(k)$ rows and columns implying the space
requirement is $O(kn)$. In order to backtrack the solution, the
whole table is needed, implying $O(n^2)$ space requirement. \QED
\end{proof}

\section{Edit Distance in k Length Substrings}\label{s:editdistance}
The well-known edit distance suggested by Levenstein \cite{l:66} is strongly
 related to the LCS similarity measure. The problem is formally defined as follows:\\

\begin{definition}\label{d:ed}
{\em The Edit Distance ($ED$) Problem}:\\
\begin{tabular}{ll}
 Input: &  Two sequences $A = a_1a_2\ldots a_n,$  $B=b_1b_2\ldots b_m$ over alphabet
$\Sigma$.\\
  Output: & The minimal number of  insertions, deletions
  and substitutions required to\\ & transform A into B.\\
\end{tabular}
\end{definition}

The Edit Distance problem is considered as a complement to the LCS problem
in the following sense. When the allowed edit operations are
insertions and deletions we have:
\[ED(A,B) = |A| + |B| - 2LCS(A,B).\]
Having this relation to the LCS problem in mind, we would like to define an
edit distance with regard to $k$-length substrings that will have a similar complementary nature
to the LCS with $k$-length substrings. Definition~\ref{d:EDk} seems natural.

\begin{definition}\label{d:EDk}
   \emph{The Edit Distance in k Length Substrings
    Problem ($EDk$)}:\\
\begin{tabular}{ll}
 Input: &  Two  sequences $A= a_1a_2 \ldots a_n$,  $B=b_1b_2 \ldots b_m $  over alphabet
$\Sigma$. \\
   Output: & The minimal number of insertions, deletions and substitutions required\\
   & to transform $A$ into $B$, while symbols between locations on which edit\\
   & operations occur are $k$-length substrings forming $LCSk$ of $A$ and $B$.\\
\end{tabular}
\end{definition}

Note that according to Definition~\ref{d:EDk}, it is possible that between two locations on which an edit operation occur there is a
common substring of length greater than $k$ but less than $2k$.
Thus, this definition enables considering only $k$ consecutive symbols as a
common $k$-length substring. The rest of the common symbols are
counted as substitution edit operations. This is due to the fact that
 in the edit distance problem there are matched symbols and unmatched symbols.
 However, the $EDk$ distance of Definition~\ref{d:EDk} considers only  not overlapping
 $k$ matchings,   therefore, any common substring
shorter than $k$ symbols cannot be considered matched.

In order to capture this difference, consider for example the sequences of Figure \ref{f:pair example}. For these sequences
$LCS2(A,B) = 2$ and a suggested common subsequence is $G\ T\ T\ G$
given by $2Match(4,1)$ and $2Match(7,5)$. The prefix $A[1..3]$ has 3 insertions
that occur before the first $2Match$. Between both $2Match$es, sequence
$A$ has a single symbol $G$ and sequence $B$ has two symbols $T\ G$.
Although $G$ appears in both sequences, it is considered as a
substitution error due to the fact that $k = 2$, forcing a
substitution and deletion errors. After the last $2Match$, $B$ has
additional two symbols, implying two insertion errors in $A$. All
in all, we have $EDk(A,B) = 7$.

Such a definition may be more accurate than the known edit distance when
we do not allow changes to occur at every odd location for example. Instead,
we would like them to be distributed on the inputs, forcing some
 $k$-length substrings be identical allowing no edit operation in these substrings.
 The $EDk$ is related to $LCSk$ as the $ED$ is related to $LCS$. This is stated
 in the following observation.
\commentout{
\begin{obsr}\label{o:EDbyLCS}
 Let a longest common
subsequence of $n$-length string $A$ and $m$-length string $B$ be
$x_1x_2\ldots x_\ell $ and let $i_1,i_2,\ldots i_\ell $ and
$j_1,j_2,\ldots j_\ell $ be the locations of $x_1x_2\ldots x_\ell
$ in the input strings $A$ and $B$, respectively. I.e.,\
$a_{i_f}=x_f=b_{j_f}$, where $0 < f \leq \ell$. Assume that
$i_{\ell + 1}= n$, $j_{\ell + 1} = m$. Then,
\[EDk(A , B) = \max \{i_1,j_1\} + \Sigma_{f=1}^{\ell } \max\{ (i_{f+1} - i_f - k), (j_{f+1} - j_f - k)\}.\]
\end{obsr}
}
\begin{obsr}\label{o:EDbyLCS}
In case the allowed edit operations are insertions and deletions,
\[EDk(A , B) = |A|+|B|-2k\cdot LCk(A, B).\]
\end{obsr}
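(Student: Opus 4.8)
The plan is to establish the identity
\[EDk(A,B) = |A| + |B| - 2k\cdot LCSk(A,B)\]
by a direct counting argument that accounts for every symbol of $A$ and $B$ under an optimal edit sequence using only insertions and deletions. First I would fix an optimal $LCSk$ alignment of $A$ and $B$ consisting of $\ell = LCSk(A,B)$ non-overlapping $k$-matchings. By Definition~\ref{d:EDk}, these $\ell$ substrings of length $k$ are exactly the portions of the two sequences that are preserved (require no edit operation), while every remaining symbol must be either inserted or deleted. The core observation is that the $\ell$ matchings cover precisely $k\ell$ positions of $A$ and, identically, $k\ell$ positions of $B$, since each $k$-matching $(i,j)$ pairs a length-$k$ substring of $A$ with a length-$k$ substring of $B$.

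Next I would count the edit operations. The symbols of $A$ not covered by any matching number $|A| - k\ell$, and each of them must be deleted; symmetrically the $|B| - k\ell$ uncovered symbols of $B$ must be inserted. Since insertions and deletions are the only permitted operations, the total cost of this edit sequence is
\[(|A| - k\ell) + (|B| - k\ell) = |A| + |B| - 2k\ell,\]
which gives $EDk(A,B) \le |A| + |B| - 2k\cdot LCSk(A,B)$. This direction relies on the non-overlap condition in Definition~\ref{d:LCSk}, which guarantees the covered positions in each sequence are disjoint so the count $k\ell$ is exact rather than an overestimate.

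For the reverse inequality I would argue that any optimal edit sequence (with insertions and deletions only) induces a set of maximal preserved $k$-length substrings that form a common subsequence in $k$-length substrings. If the optimal $EDk$ alignment preserves a total of $p$ positions of $A$ matched against $p$ positions of $B$ grouped into non-overlapping blocks of length at least $k$, then by Definition~\ref{d:EDk} each such block contributes $\lfloor \text{length}/k \rfloor$ to a valid $LCSk$ configuration, so $p \le k\cdot LCSk(A,B)$ after accounting for the definition's stipulation that common substrings between edit locations are counted in $k$-length units. The edit cost is then $|A| + |B| - 2p \ge |A| + |B| - 2k\cdot LCSk(A,B)$, establishing equality.

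The main obstacle I anticipate is the subtlety flagged in the paragraph preceding Definition~\ref{d:EDk}: a preserved common substring may have length strictly between $k$ and $2k$, in which case only $k$ of its symbols count as a genuine $k$-matching and the surplus common symbols are charged as substitution-like errors. Since the observation restricts attention to insertions and deletions, I must verify that this truncation behaves correctly under the pure insertion/deletion regime — namely that a surplus common symbol between two $k$-matchings is paid for by exactly one deletion and one insertion (a net cost of $2$), so that the bookkeeping $|A|+|B|-2k\ell$ remains exact. Making the correspondence between optimal edit scripts and optimal $LCSk$ configurations tight, rather than merely one-directional, is where the argument requires the most care, and Observation~\ref{o:EDbyLCS}'s analogy with the classical $ED(A,B)=|A|+|B|-2LCS(A,B)$ identity (the $k=1$ case) is the guiding principle throughout.
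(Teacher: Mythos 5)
Your proposal is correct, but note that the paper itself offers no proof of this statement at all: it is presented as an \emph{Observation}, justified implicitly by direct analogy with the classical identity $ED(A,B)=|A|+|B|-2\cdot LCS(A,B)$, and the text immediately moves on to explain why the identity nevertheless does not make $EDk$ computable from an $LCSk$ solution once substitutions are allowed. Your two-sided counting argument supplies exactly the justification the paper leaves implicit: the upper bound by building the edit script that deletes the $|A|-k\ell$ uncovered symbols of $A$ and inserts the $|B|-k\ell$ uncovered symbols of $B$ around an optimal $LCSk$ configuration, and the lower bound by observing that Definition~\ref{d:EDk} forces the unedited symbols of any valid script to decompose into non-overlapping $k$-matchings forming a common subsequence in $k$-length substrings, so the number $p$ of preserved positions satisfies $p \le k\cdot LCSk(A,B)$ and the cost $|A|+|B|-2p$ is bounded below accordingly. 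One small tightening worth making: under the pure insertion/deletion regime your worry about blocks of common symbols of length strictly between $k$ and $2k$ resolves itself, because the definition forbids such surplus symbols from being left unedited, so they are simply among the deleted/inserted symbols (cost $2$ each) and $p$ is exactly $k$ times the number of $k$-matchings; your $\lfloor \mathrm{length}/k \rfloor$ bookkeeping is a correct but slightly roundabout way of saying this. (You may also wish to flag the typo in the statement itself, where $LCk$ should read $LCSk$.)
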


In spite of Observation \ref{o:EDbyLCS}, the $EDk$ cannot be easily
 found based on finding an $LCSk$ solution, because the substitution
 operation is usually used so different longest common subsequences
  in $k$-length substrings may define different number of required
  edit operations to transform
one input into the other. To see this, consider the $LCS2$ for the
sequences appearing in Figure \ref{f:pair example}. The maximal
number of $2$ matchings between the inputs is two, however there
are several options to select these $2$ matchings, two of which
appear in Figure \ref{f:2waysED}.
\begin{figure}[h]
\begin{center}
A =  \begin{tabular}{llllllll}
       \footnotesize{1} & \footnotesize{2}  & \footnotesize{3} & \footnotesize{4} &\footnotesize{5} & \footnotesize{6} & \footnotesize{7} &\footnotesize{8}    \\
      T$ \quad $    & G$ \quad $   & C$ \quad $   & G$ \quad $    & \textbf{T}$ \quad $     &  \textbf{G}$ \quad $     & \textbf{T}$ \quad $    & \textbf{G}\\
\end{tabular}

B =  \begin{tabular}{llllllll} 
      G$ \quad $ & T$ \quad $ & \textbf{T}$ \quad $    & \textbf{G}$ \quad $      & \textbf{T}$ \quad $    &  \textbf{G}$ \quad $      & C$ \quad $   & C 
\end{tabular}\\
\vspace{0.5cm}
 A =  \begin{tabular}{llllllll}
       \footnotesize{1} & \footnotesize{2}  & \footnotesize{3} & \footnotesize{4} &\footnotesize{5} & \footnotesize{6} & \footnotesize{7} &\footnotesize{8}    \\
      \textbf{T}$ \quad $    & \textbf{G}$ \quad $   & C$ \quad $   & G$ \quad $    & T$ \quad $     &  G$ \quad $     & \textbf{T}$ \quad $    & \textbf{G}\\
\end{tabular}

B =  \begin{tabular}{llllllll} 
      G$ \quad $ & T$ \quad $ & \textbf{T} $ \quad $    & \textbf{G}$ \quad $      & \textbf{T}$ \quad $    &  \textbf{G}$ \quad $      & C$ \quad $   & C 
\end{tabular}
\normalsize \caption{2 options for LCSk and the EDk
derived}\label{f:2waysED}
\end{center}
\end{figure}
The upper option refers to $2$ matchings$(5,3)$ and $(7,5)$
implying edit distance of 6, including 2 substitution operations.
The lower option of considering $(1,3)$ and $(7,5)$ yielding 8
required edit operations. Consequently, finding a specific $LCSk$
solution does not guarantee an optimal sequence of edit operations
giving the optimal $EDk$ distance and a separate dynamic
programming for the $EDk$ is in order.

\subsection{Solving the EDk Problem}\label{ss:EDkalgorithm}
We suggest solving the $EDk$ problem using dynamic programming,
similarly to the original Edit Distance problem, with the required
modifications. Entry $EDk[i,j]$ contains the score of the edit
distance in $k$ length substrings of the prefixes $A[1...i]$ and
$B[1...j]$. The score of $EDk[i,j]$ minimizes the number of edit
operations required so far, taking into account all three edit
operations. If we encounter $a_i $ not matching $ b_j$, it means a
required insertion or deletion of $a_i$ or a substitution
operation. In the other case of $a_i = b_j$, we need to verify
that a $k$ matching ends at $(i,j)$ in order to declare a $k$
length common substring requiring no edit operation. This can be
easily ascertained by comparing $dcount[i,j]$ with $k$.

According to  Definition \ref{d:EDk} unedited symbols should form a legal $LCSk$. Hence, it
is necessary to verify that no two overlapping $k$ matchings are left unedited. Naturally, $EDk[i,0] = i$ and $EDk[0,j] = j$. Lemma
\ref{l:EDkfilling} below formally describes the computation of
$EDk[i,j]$.

\begin{lem}{\textbf{The EDk Recursive Rule.}}\label{l:EDkfilling}
\begin{displaymath} EDk[i,j] = min \left\{
\begin{array}{l}
EDk[i-1,j] + 1  \\
EDk[i,j-1] + 1   \\
\left\{
\begin{array}{ll}
EDk[i-k,j-k]  & \quad {\rm if}\ dcount[i,j]\geq k  \\
EDk[i-1,j-1] + 1 & \quad {\rm if}\ dcount[i,j]< k  \\

\end{array} \right.\\
\end{array} \right.\\
\end{displaymath}
\end{lem}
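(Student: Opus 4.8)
The plan is to justify each of the three options in the minimization by interpreting $EDk[i,j]$ as the optimal cost of editing the prefix $A[1\ldots i]$ into $B[1\ldots j]$ under the $k$-substring constraint, and arguing that every optimal edit script must end in exactly one of these ways. First I would set up the standard dynamic-programming induction: assuming correctness of all entries $EDk[i',j']$ with $i'\le i$, $j'\le j$ and $(i',j')\neq(i,j)$, I would show that the value given by the recurrence is both achievable (upper bound) and necessary (lower bound). The boundary conditions $EDk[i,0]=i$ and $EDk[0,j]=j$, already noted before the statement, serve as the base case, since transforming a prefix into the empty string requires exactly that many deletions (or insertions symmetrically).

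For the achievability direction I would consider the possible fates of the last symbols $a_i$ and $b_j$. If $a_i$ is deleted, the remaining task is to transform $A[1\ldots i-1]$ into $B[1\ldots j]$, giving the term $EDk[i-1,j]+1$; symmetrically, inserting $b_j$ gives $EDk[i,j-1]+1$. The diagonal case is where the $k$-substring structure enters. Here I would split on whether a legal (non-overlapping) $k$-matching can close at position $(i,j)$: this is exactly detectable by checking $dcount[i,j]\ge k$, since by Definition~\ref{d:dcount} the diagonal counter records the length of the longest common suffix of $A[1\ldots i]$ and $B[1\ldots j]$. When $dcount[i,j]\ge k$, the last $k$ symbols form an unedited $k$-length substring contributing zero cost, so we recurse to $EDk[i-k,j-k]$, having consumed exactly $k$ symbols on each side in a way that respects the non-overlap requirement of Definition~\ref{d:EDk}. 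When $dcount[i,j]<k$, no matching of length $k$ can terminate at $(i,j)$, so even if $a_i=b_j$ the pair cannot be left unedited and must be counted as a substitution, yielding $EDk[i-1,j-1]+1$.

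For the lower bound (necessity) I would argue that any optimal edit script realizing $EDk[i,j]$ must, in its final step, either delete $a_i$, insert $b_j$, or align $a_i$ with $b_j$; removing that last operation leaves a valid edit script for a strictly smaller prefix pair, whose cost is bounded below by the corresponding table entry by the inductive hypothesis. The only subtlety is the aligned case: I must verify that the dichotomy on $dcount[i,j]$ correctly forces the substitution penalty precisely when no $k$-matching can legally close, so that the recurrence never under-counts. In particular, when $dcount[i,j]\ge k$ but the optimal script nonetheless chooses to substitute rather than extend an unedited $k$-substring, the recurrence still dominates because the substitution branch would be captured through $EDk[i-1,j-1]+1$ reachable via the other two options, so the $\min$ is not harmed.

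The main obstacle I expect is handling the overlap constraint from Definition~\ref{d:EDk} cleanly within the diagonal jump of size $k$. Specifically, when $dcount[i,j]>k$ (a run longer than $k$), the recurrence treats only the last $k$ symbols as a free substring and relegates the excess to edit operations inside $EDk[i-k,j-k]$; I would need to confirm that this does not wrongly permit two overlapping unedited $k$-matchings, and that the forced substitution of leftover symbols (as illustrated by the $G\ T\ T\ G$ example and the $EDk(A,B)=7$ computation preceding the statement) is exactly what the recursion produces. Establishing that the jump-by-$k$ rule composes consistently with the single-step substitution rule to enforce non-overlapping $k$-matchings, without ever misclassifying a short common run as matched, is the crux of the correctness argument; the remaining bookkeeping is routine.
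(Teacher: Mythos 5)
Your overall route is the same as the paper's: interpret $EDk[i,j]$ as the optimal constrained edit cost of the prefixes, case-split on the fate of $a_i$ and $b_j$ (deletion, insertion, diagonal), use the test $dcount[i,j]\geq k$ to decide between closing a $k$-matching (jump to $EDk[i-k,j-k]$) and a forced substitution, and justify everything by induction; the paper's own proof is exactly this enumeration, at a similar level of informality. The problem is the step you yourself single out as the crux. When $dcount[i,j]\geq k$ the recurrence omits the option $EDk[i-1,j-1]+1$, yet an optimal script may well end by substituting $a_i$ for $b_j$ rather than by closing a $k$-matching at $(i,j)$ (closing it consumes $a_{i-k+1},\ldots,a_{i-1}$ and $b_{j-k+1},\ldots,b_{j-1}$, which might be better spent on differently aligned $k$-matchings). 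Your proposed fix -- that this substitution cost is ``captured through $EDk[i-1,j-1]+1$ reachable via the other two options'' -- is false: the other two options are $EDk[i-1,j]+1$ and $EDk[i,j-1]+1$, and the best general bound they admit, via $EDk[i-1,j]\leq EDk[i-1,j-1]+1$, is $EDk[i-1,j-1]+2$. So your necessity (lower-bound) argument has a hole exactly where you promised a verification, and the verification you sketched would fail.

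What actually closes the hole is a domination claim for the jump branch itself: for $i,j\geq k$ one always has $EDk[i-k,j-k]\leq EDk[i-1,j-1]+1$, so when $dcount[i,j]\geq k$ the term $EDk[i-k,j-k]$ already dominates the omitted substitution term and the $\min$ is unharmed. To prove it, view any feasible solution as a chain of non-overlapping $k$-matchings whose cost is the sum, over the gaps between consecutive matchings (plus the gap before the first and after the last), of the maximum of the two gap lengths. Take an optimal chain for $(i-1,j-1)$. At most one of its matchings, necessarily the last one $M$, can touch the final $k-1$ positions of either string, because two disjoint intervals of length $k$ inside $A[1..i-1]$ cannot both meet the window $A[i-k+1..i-1]$. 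If $M$ does not touch these windows, the same chain is feasible for $(i-k,j-k)$ with a smaller final gap. If it does, remove $M$: writing $\alpha,\beta$ for the $A$- and $B$-gaps before $M$ and $\gamma,\delta$ for those after it, the merged gap costs $\max(\alpha+\gamma+1,\beta+\delta+1)\leq\max(\alpha,\beta)+\max(\gamma,\delta)+1$, i.e.\ at most one extra operation. Either way the bound holds. Note that the paper glosses over this point as well (it asserts the case analysis and says it is ``easily proven by induction''); but your write-up, unlike the paper's, commits to a specific justification of it, and that justification is incorrect.
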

\begin{proof}
$EDk[i,j]$ contains the minimal number of edit operations required
in order to transform sequence $A$ into sequence $B$, with respect
to $k$ length substrings, preserving their order in the input
sequences. The $EDk[i,j]$ is obtained by minimizing the score due
to all possible edit operations:
\begin{enumerate}
\item Considering $a_i$ as a deleted symbol, implies increment of
$EDk[i-1, j]$.
 \item Considering $b_j$ as a redundant symbol,
hence an insertion in the $A$ sequence is in order, implies
increment of $EDk[i, j-1]$.
 \item Considering $a_i$ and $b_j$ as the last matching symbols of a $k$ matching or
 part of a substitution. Note that the same case holds for  $a_i \neq b_j$ and $a_i = b_j$ if the
 equal symbols are part of a different $k$ matching or a shorter matching due to the problem and
 and the table entries definitions.
  We distinguish between these options by
 comparing  $dcount[i, j]$ with $ k $.
\begin{enumerate}
  \item The case of $a_i$ and $b_j$ are the last matching symbols of a $k$ matching
   implies $dcount[i,j] \geq k$ as $(i-k+1,j-k+1)$ is a common
 $k$ length substring. Therefore, no increase to the previously
 computed score is required. Nevertheless, due to the necessity to avoid
 overlaps between $k$ matchings, when $(i-k+1,j-k+1)$ is a $k$
 matching, the $EDk$ computation must verify that previous score does not
 take into account an overlapping $k$ matching.  Therefore, we
 consider $EDk[i-k,j-k]$ referring  to the minimal number of required  edit operations
 when the last $k$ matching can end  by matching at most $a_{i-k}$ with $
 b_{j-k}$.
 \item For the case of $dcount[i, j]< k$, as there is no $k$
 matching ending at $a_i$ and $b_j$, we have a single substitution error.
 Consequently, there is no restriction on the ending of a previous $k$
 matching, as no overlap can occur at these indices, so we
   increment the score of $EDk[i-1,j-1]$.
 \end{enumerate}
\end{enumerate}
These claims can be easily proven by induction.

Note that as every symbol of the inputs
is either part of a $k$ matching or it represents an edit error,
as a consequence, the minimization of the $EDk$ scores avoiding
overlaps between $k$ matchings also maximizes the number of $k$
length common substrings.\QED
\end{proof}

\paragraph{\textbf{Example.}}
Figure \ref{f:EDktable} depicts an ED2 table. We demonstrate all
possible cases in the process of obtaining a minimal score.
\begin{enumerate}
\item Considering $ED2[6,2]$, the minimal value that can be
obtained is  4, by incrementing  $ED2[5,2]$
 \item  Considering
$ED2[4,7]$, the minimal value that can be obtained is
 3, by incrementing  $ED2[4,6]$
\item  \begin{enumerate}
  \item  Considering $ED2[4,5]$, we see $dcount[4,5] = 3 > 2$ implying a 2
  matching $(3,4)$ (GC) ends by matching $a_4$ with $b_5$. In order to avoid overlaps, we use the
  score  of  $ED2[2,3]$. This value is smaller than the other possible values in this entry.
  \item Considering $ED2[3,3]$, we see $dcount[3,3] = 0 $ implying no 2
  matching  ends by matching $a_3$ with $b_3$. With no overlaps to avoid, we increment the
  score  of  $ED2[2,2]$. The obtained value is smaller than the other possible values in this entry.
  \end{enumerate}
\end{enumerate}

\begin{flushleft}
\begin{figure}[h]\label{f:EDktable}
\begin{center}
\footnotesize
\begin{tabular}{|cc|c|c|c|c|c|c|c|c|c|}
  \hline
 & & 0  &  1     & 2         & 3          & 4         &5         & 6        & 7        &8    \\
 &  &  &  C    & T          & T          & G         & C        &  T       & T       & T\\ \hline
0 $\ $& &  0 &   1   & 2          & 3          & 4         &5 & 6
&7 & 8 \\\hline
 1 $\ $&  C & $\ $1$\ $&   1   & 2          & 3          & 4         &5         & 6        &7        & 8 \\
  &   & &   $\ $dc=1   & $\ $dc=0          & $\ $dc=0          & $\ $dc=0         &$\ $dc=1         &$\ $dc=0        &$\ $dc=0        & $\ $dc=0
 \\\hline
2$\ $& T & $\ $2$\ $&    2  & 0          & 1          & 2        &3         & 4        &5        & 6 \\
  &   & & $\ $ dc=0   & $\ $dc=2          &$\ $dc=1          &$\ $dc=0         &$\ $dc=1         &$\ $dc=2        &$\ $dc=1        &
 dc=1  \\\hline
 3$\ $& G & $\ $3$\ $&  3    & 1          &1           & 2         &3         & 4        &5         & 6 \\
  &   & &  $\ $dc=0   &$\ $dc=0          &$\ $dc=0          &$\ $ dc=2         &$\ $dc=0         &$\ $dc=0        &$\ $dc=0        &
 dc=0  \\\hline
4$\ $& C & $\ $4$\ $& 4     &  2         & 2          & 2         &1         & 2        &3         & 4 \\
    &  &  &  $\ $dc=1   & $\ $dc=0          & $\ $dc=0          &$\ $dc=0         &$\ $dc=3         &$\ $dc=0        &$\ $dc=0        &
 dc=0  \\\hline
5$\ $& T & $\ $5$\ $&  5    & 3          & 3          & 3         & 2        & 2        &3         & 4 \\
    &   & &   $\ $dc=0   & $\ $dc=2          & $\ $dc=1          & $\ $dc=0         &$\ $dc=0         &$\ $dc=4        &$\ $dc=1
    &  dc=1 \\\hline
6$\ $& T & $\ $6$\ $&  6    &4           & 4          & 4         &3         &3         &1         & 2 \\
     &   & &  $\ $ dc=0   & $\ $dc=1          &$\ $dc=3          & $\ $dc=0         &$\ $dc=0         &$\ $dc=1        &$\ $dc=5        &
 dc=2 \\\hline
7$\ $& T & $\ $7$\ $ &  7    &  5         &  5         & 5          &4        & 4        & 2        & 2\\
     &  &  &  $\ $ dc=0   & $\ $dc=1          & $\ $dc=2          & $\ $dc=0         &$\ $dc=0         &$\ $dc=0        &$\ $dc=2        &
 dc=6 \\\hline
8$\ $& G & $\ $8$\ $ &  8    &  6        &     6      &    4 &  5  & 5 & 3  & 3 \\
    &   & &  $\ $ dc=0   & $\ $dc=0          &$\ $ dc=0          &$\ $dc=3         &$\ $dc=0         &$\ $dc=0        &$\ $dc=0        &
 dc=0 \\\hline
\end{tabular}
\end{center}
\normalsize \caption{An ED2 Table. The numbers represent the
minimal number of required edit operations. The diagonal counter
$dcount$ is denoted by dc.}
\end{figure}
\end{flushleft}

\begin{thm}\label{t:EDk time} The $EDk(A,B)$ problem can be solved in $O(nm)$ time and $
O(km)$ space, where $m$, $n$ are the lengths of the input
sequences $A$ and $B$ respectively. Backtracking the solution
requires time of $O( EDk(A,B) +\ell )$ where $EDk(A,B)$ is the
optimal score and  $\ell$ is the number of $k$ matchings
considered in the solution, and $O(nm)$ space.
\end{thm}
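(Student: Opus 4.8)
The plan is to mirror the proof of Theorem~\ref{t:k time}, treating the three complexity claims (time, space, backtracking) in turn, since the $EDk$ algorithm is structurally the edit-distance analogue of the $LCSk$ algorithm. First I would establish the time bound. The algorithm fills a table of size $(n+1)\times(m+1)$, with the boundary initialized by $EDk[i,0]=i$ and $EDk[0,j]=j$. For each interior entry, the recursive rule of Lemma~\ref{l:EDkfilling} selects the minimum among a constant number of candidate values; the only subtlety is the third branch, which tests whether $dcount[i,j]\geq k$. By maintaining the diagonal counter according to Definition~\ref{d:dcount}, each $dcount[i,j]$ is computed in constant time from $dcount[i-1,j-1]$, so the test reduces to a single comparison. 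Hence each entry costs $O(1)$ and the whole table is filled in $O(nm)$ time.

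Next I would establish the $O(km)$ space bound for computing the optimal score. Inspecting Lemma~\ref{l:EDkfilling}, the value $EDk[i,j]$ depends only on $EDk[i-1,j]$, $EDk[i,j-1]$, $EDk[i-1,j-1]$, and $EDk[i-k,j-k]$; thus at any point during row-by-row computation it suffices to retain the current row together with the previous $k$ rows, each of length $m+1$. The $dcount$ values obey a one-step diagonal recurrence and therefore need only the immediately preceding row, which is already retained. This yields $O(km)$ space when only the final score $EDk[n,m]$ is required.

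Finally I would treat the backtracking claim. Starting at cell $[n,m]$, I trace back by identifying, at each cell, which branch of Lemma~\ref{l:EDkfilling} produced the minimum: a deletion step moves to $[i-1,j]$, an insertion to $[i,j-1]$, a substitution to $[i-1,j-1]$, and a $k$ matching (taken when $dcount[i,j]\geq k$) moves to $[i-k,j-k]$. Each of the first three kinds of step accounts for exactly one unit of the optimal score, so their total number is at most $EDk(A,B)$; each step of the fourth kind corresponds to one of the $\ell$ non-overlapping $k$ matchings in the reconstructed solution. Consequently the backtracking visits $O(EDk(A,B)+\ell)$ cells, giving the stated running time, and since every cell may be revisited during the trace we keep the full table, requiring $O(nm)$ space.

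The main point requiring care is the backtracking accounting: I would verify that the single step taken for a $k$ matching (a jump of $k$ along the diagonal) is correctly charged to $\ell$ rather than to $EDk(A,B)$, and conversely that every unit-cost edit step is charged to $EDk(A,B)$, so that no step is double-counted and the two terms in $O(EDk(A,B)+\ell)$ are genuinely separate. The remaining calculations are routine and follow the template of Theorem~\ref{t:k time}.
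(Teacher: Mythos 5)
Your proposal is correct and follows essentially the same route as the paper's proof: fill the $nm$ table with constant work per entry using the $dcount$ recurrence, keep only the last $k$ rows for the $O(km)$ score-only computation, and retain the full table to trace back one constant-time step per edit operation or $k$ matching, giving $O(EDk(A,B)+\ell)$ backtracking time. The only (immaterial) difference is that the paper stores explicit predecessor indices in each cell and follows them, whereas you re-identify the minimizing branch of Lemma~\ref{l:EDkfilling} during the trace; both yield the same bounds.
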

\begin{proof}
Similarly to the proof of Theorem \ref{t:k time}, we fill a table
of size $nm$. Each entry is filled according to Lemma
\ref{l:EDkfilling} by performing a constant number of comparisons.
The computation of the diagonal count $dcount[i,j]$ requires a
constant time as well as appears in Definition \ref{d:dcount}.
Hence,  $O(nm)$ time  is required for computing the optimal number
of edit operations with respect to the longest  common subsequence
in $k$ length substrings.

In case we want to find  selected edit operations and a common
subsequence in $k$ length substrings suitable to the optimal $EDk$
score, we need to save at each  table entry the indices of the
entry from which the minimal score was deduced. Having this
information, we start the backtracking from $EDk[n,m]$ and report
an operation according to the indices saves there: $[n-1,m]$ imply
a deletion, $[n,m-1]$ imply insertion, $[n-1,m-1]$ refer to
substitution and $[n-k,m-k]$ stand for a $k$ matching. We repeat
the process with the $EDk[f,g]$ where $[f,g]$ are the indices
saved at the previous entry until $f=0 $ or $g =0$. All in all,
every edit operation and $k$ matching requires a constant time
backtrack, yielding $O(EDk(A,B) +\ell )$ time requirement for the
backtracking process. Note that this time is bounded by the
maximal number of edit operation which is $\max \{n,m\}$.

\textbf{Regarding space:} Each of the $nm$ entries contains a
constant number of values. Nevertheless, due to Lemma
\ref{l:EDkfilling}, during the computation of $EDk[i,j]$ we need
only $k$ rows    backwards. As a consequence, at each step we save
$O(k)$ rows  implying the space requirement is $O(km)$. In order
to backtrack the solution, the whole table is needed, yielding
$O(nm)$ space requirement. \QED
\end{proof}

\section{Conclusion}\label{s:conclusion}
In this paper we defined a generalization of the LCS problem,
where each matching must consist of $k$ consecutive symbols, and
by thoroughly understanding the traits of the problem proved an
algorithm with the same time complexity as the special case of LCS
can solve the generalized problem. We also considered an adequate
complementary edit distance measure and showed similar results
hold also for this distance measure. As we consider the LCSk as a
more accurate sequence similarity measure, we believe this problem
should also be studied for generalized sequences, such as weighted
sequences \cite{ags:10}. Other ways for changing the traditional
LCS definition to obtain more accurate similarity measures may
also be suggested.

\end{document}